\documentclass[12pt,english]{article}

\usepackage[english]{babel}
\usepackage{amsmath}
\usepackage{amssymb}
\usepackage{graphicx}
\usepackage{xcolor}
\usepackage{amsthm}
\usepackage{xspace}
\usepackage{authblk}
\usepackage{bm}
\usepackage[linesnumbered,ruled,vlined]{algorithm2e}
\SetKwInput{KwInput}{Input}
\SetKwInput{KwOutput}{Output}
\SetKwIF{If}{ElseIf}{Else}{if}{then}{else if}{else}{endif}

\newcommand{\trace}{\operatorname{trace}}
\newtheorem*{remark}{Remark}

\newtheorem{expl}{Example} 

\newcommand{\red}[1]{\textcolor{red}{#1}}

\newtheorem{theorem}{Theorem}[section]
\pdfoutput=1

\begin{document}
	\title{A sub-sampling algorithm preventing outliers}
	\author[1]{Deldossi, L.}
	\author[2]{Pesce, E.}
		\author[3]{Tommasi, C.}
	\affil[1]{Department of Statistical Science, Università Cattolica del Sacro Cuore, Milan, Italy}
	\affil[2]{Swiss Re Institute, Swiss Re Management Ltd, Zurich, Switzerland}
	\affil [3]{Department of Economics, Management and Quantitative Methods, University of Milan, Italy}
	\maketitle
\begin{abstract}
	Nowadays, in many different fields, massive data are available and for several reasons, it might be convenient to analyze just a subset of the data. The application of the D-optimality criterion can be helpful to optimally select a subsample of observations. However, it is well known that D-optimal support points lie on the boundary of the design space and if they go hand in hand with extreme response values, they can have a severe influence on the estimated linear model (leverage points with high influence). To overcome this problem, firstly, we propose an unsupervised  “exchange” procedure that enables us to select a “nearly” D-optimal subset of observations without high leverage values.
	Then, we provide  a  supervised version of this exchange procedure, where besides high leverage points also the outliers in the responses (that are not associated to high leverage points) are avoided. This is possible because,
	unlike other design situations, in subsampling from big datasets the response values  may be available. 
	
	Finally, both the unsupervised and the supervised selection procedures are generalized to  I-optimality, with the goal of getting accurate predictions.
\end{abstract}

\section{Introduction}
Recently, the theory of optimal design has been exploited to draw a subsample from huge datasets, containing the most information for the inferential goal; see for instance, \cite{drovandi2017, wang2019, wang2018, deldossi2020} among others. 
Unfortunately, Big Data sets usually are the result of passive observations, so some high leverage values in the covariates and/or outliers in the response variable (denoted by Y) may be present. 
The most commonly applied criterion is the D-optimality. It is well known that D-optimal designs tend to lie on the boundary of the design region thus in presence of high leverage values, all of them would be selected. Since this circumstance could have a severe influence on the estimated linear model (leverage points with high influence), in this study we propose an “exchange” procedure to select a “nearly” D-optimal subset which does not include the high leverage values. 
Avoiding high leverage points, however, does not guard from all the outliers in Y. Therefore, we also modify the previous method  to exploit the information about the responses and avoid the selection of the abnormal Y-values.
The first proposal is an unsupervised  procedure, as it is not based on the response observations, while the latter is a supervised exchange method. 
Finally, both these exchange algorithms are extended to the I-criterion, which aims at providing accurate predictions in a set of covariate-values. 

After introducing methodology and notation in Section~\ref{sec:notation}, in Section~\ref{sec:modified_exchange} we introduce the novel modified exchange algorithm to obtain both a noninformative and an informative D-optimal sample without outliers. Moreover an approach for the initialization of the above algorithms is proposed.  In Section~\ref{sec:I_optimal_sample} we adapt our proposal to the I-optimal criterion with the goal of selecting a subsample to get accurate predictions. Finally in Section~\ref{sec:simulations}  we perform some simulations which serve as motivation for the problem presented in this paper.
	
\section{Notation and motivation of the work}
	\label{sec:notation}
	
	Assume that $N$ independent responses have been generated by a super-population model
	$$
	Y_i = \bm{x}^\top_i \boldsymbol{\beta} + \varepsilon_i, \quad i=1,\ldots, N,
	$$ 
	where  $^\top$ denotes transposition, $\boldsymbol{\beta}=(\beta_0,\beta_1,\ldots,\beta_k)^\top$ is  a vector of unknown coefficients, $\bm{x}_i^\top=(1,\tilde{\bm{x}}_i^\top)$ where 
	$\tilde{\bm{x}}_i=({x}_{i1},\ldots,x_{ik})^\top$, for $i=1,\ldots, N$, are $N$  iid repetitions of a $k$-variate explanatory variable, and $\varepsilon_i$ are iid random errors with zero mean and equal variance $\sigma^2$. \\
	$\bm{D}=\{(\tilde{\bm{x}}_1^\top, Y_1),\ldots,(\tilde{\bm{x}}_N^\top, Y_N)\}$ indicates the available dataset, which is assumed to be a tall dataset, i.e. with $k << N$.
	\\
	The population under study is denoted by $U=\{1,\ldots,N\}$ and $s_n \subseteq U$ denotes a sample without replications of size $n$ from $U$ (i.e. a collection of $n$ different indices from $U$).
	Herein we describe a new sampling method from a given dataset $\bm{D}$ with the goal of selecting $n$ observations ($k < n << N$) which produce an efficient estimate of the model coefficients even in the presence of outliers.
	
	Given a sample $s_n=\{i_1,\ldots,i_n \}$, let $\bm{X}$ to be the $n \times (k+1)$ matrix whose rows are $\bm{x}_i^\top$, for $i \in s_n$, and let $\bm{Y} = (Y_{i_1}, \ldots, Y_{i_n})^\top$ be the $n \times 1$ vector of the sampled responses.
	We consider the  OLS estimator of the coefficients of the linear model based on the sample $s_n$:
	\begin{eqnarray*}
		\hat{\boldsymbol{\beta}}
		& = &
		\hat{\boldsymbol{\beta}}(s_n) = (\bm{X}^\top  \bm{X})^{-1} \bm{X}^\top  \bm{Y} \\
		& = & 
		\left(\sum_{i=1}^{N} \bm{x}_i \bm{x}_i^T I_l\right)^{-1} \sum_{i=1}^{N} \bm{x}_i \, Y_i\, I_i, 
	\end{eqnarray*}
	where 
	\begin{equation*}
	I_i = \begin{cases}
	1 & \mbox{if } i \in s_n\\
	0 & \mbox{otherwise} 
	\end{cases} \, , \qquad \text{with } i = 1, \ldots, N
	\end{equation*}
	denotes the sample inclusion indicator.
	
	To improve the precision of  $\hat{\boldsymbol{\beta}}$, we suggest to select the sample $s_n$ according to $D$-optimality. We denote the $D$-optimum sample as 
	\begin{equation*}
	s_n^* = \underset{s_n = \left\{ I_1, \ldots,I_N \right\} }{\operatorname{arg\,sup}}  \left| \sum_{i = 1}^{N} \bm{x}_i \bm{x}_i^\top I_i \right| \,.
	\end{equation*}
		
	Since the D-optimal support points usually lie in the boundary of the experimental region, when  the dataset $\bm{D}$ contains  high leverage points, $s_n^*$ includes them and if they are associated to abnormal responses then they may produce a non-reliable estimate.  Example~\ref{example_doptimal_outliers} shows how the outliers are selected by the $\bm{D}$-optimal sample.
	\begin{expl}
		\label{example_doptimal_outliers}
		An artificial dataset $\bm{D}$ with $N=10000$ observations has been generated from a simple linear model,   
		\begin{equation*}
		Y_{i} = \beta_0 + \beta_1 x_i + \varepsilon_i, \quad i = 1, \ldots, N,
		\end{equation*}
		in the following way:\\
		for $i = 1, \ldots, 9990$, $\boldsymbol{\beta} = (1.5, 2.7)^\top$,
		$x_i \sim \mathcal{N}(3, 4)$, $\varepsilon_i \sim \mathcal{N}(0, 9^2)$;\\
		for $i = 9991, \ldots,10000$, $\boldsymbol{\beta} = (1.5, -2.7)^\top$
		$x_i \sim \mathcal{N}(3, 20)$, $\varepsilon_i \sim \mathcal{N}(0, 20^2)$.\\ 
		 The left-hand side of Figure~\ref{fig:example_doptimal_outliers} displays these last 10 observations in red, while the majority of the data, generated from the first distribution,  are displayed in black.
		The right-hand side of Figure~\ref{fig:example_doptimal_outliers} emphasises the D-optimal subsample  of size $n =100$, $s_n^*$, displaying its support points in blue.  
		As expected, all the abnormal values in $X$ are included in $s_n^*$ because they maximize the determinant of the information matrix ($s_n^*$ has been obtained by applying the  function \texttt{od\_KL} of the \texttt{R} package \texttt{OptimalDesign}~\cite{harman2019optimaldesign}). 
		
		\begin{figure}
			\centering
			\includegraphics[width=0.9\textwidth]{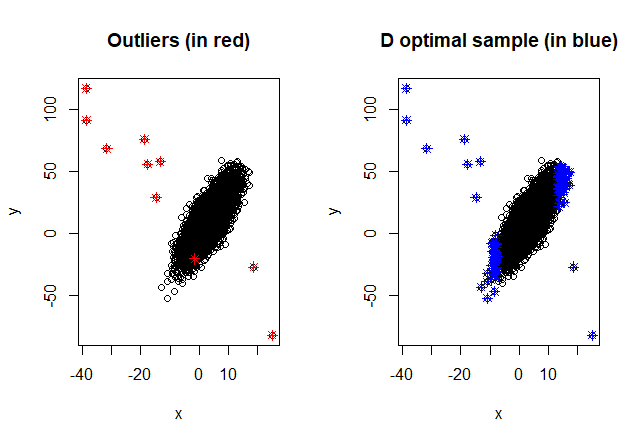}
			\caption{Outliers (in red) and the $D$-optimal sample (in blue). \label{fig:example_doptimal_outliers}}
		\end{figure}	
	\end{expl}
	To avoid the outliers when applying  D-optimal subsampling, we propose  a modification of the well known exchange algorithm.
	Before describing our proposal, we recall that an observation  $\bm{x}_i$ with $i=1, \cdots, n$ is called an \textit{high leverage point} when its leverage value   
$h_{ii} =\bm{x}^\top_i (\bm{X}^\top  \bm{X})^{-1} \bm{x}_i $ is greater than a threshold, i.e. $$h_{ii} >\nu_1\,\frac{k+1}{n} $$
where $\nu_1$ is a tuning parameter usually set equal to 2 ~\cite{hoaglin1978}.\\ 
 A high leverage point can be either \textit{good} or \textit{bad}:  if its  associated response is \lq\lq abnormal'', it is bad  because,  in this case, it might alter the  model fitted by the bulk of the data; otherwise, if the response is not an outlier, the high leverage point is good because it would reduce the variance of the parameters' estimates. 
    \section{Modified Exchange Algorithms}
    \label{sec:modified_exchange}
    The common structure of the $t$-th iteration of an exchange algorithm consists in adding a point $\bm{x}_{j_a}$ (chosen from a list of candidate points ${\cal C}^{(t)}$) to the available sample $s_n^{(t-1)}$ and then delete a point from it. The choice of the augmented and deleted points is based on the achievement of some optimality criterion. For D-optimality, the augmented observation $\bm{x}_{j_a}$ is the $x$-value of the unit with the largest prediction variance, 
    $$
    j_a=\arg\max_{j\in {\cal C}^{(t)}}\bm{x}^\top_j (\bm{X}^\top  \bm{X})^{-1} \bm{x}_j, 
    $$
    and is chosen from the experimental domain, if replications are admitted, or from  the complementary space of $s_n^{(t-1)}$, if it is not.
    The deleted point is that with the smallest prediction variance, i.e. with the smallest leverage value $h_{ii}$, $i=1,\ldots,n+1$ (see Chp. 12 in \cite{atkinson2007}).
    Our main idea is to modify the algorithm in such a way that points with high leverage scores are not proposed for the exchange, avoiding  the dangerous combination of high leverage scores and abnormal responses. This goal is reached:
        a) by  switching the augmentation and deletion steps; b) by changing the set ${\cal C}^{(t)}$ where the observation to be added is searched.
    
    In step b), if the information about the responses is not used to identify the set  ${\cal C}^{(t)}$, then 
     the modified D-optimal sample 
     is non-informative for the parameters of interest.  This unsupervised procedure is described in detail in Subsection \ref{sec:sampling_strategy}. 
     
      Avoiding high leverage points, however, does not guard from all the outliers in $Y$;  points may be present which are in the core of the data wrt the auxiliary variables but are abnormal wrt the response variable. 
      In Subsection \ref{informative sample} we propose a supervised version of the algorithm, where in step b) we exploit the response values to remove the outliers in $Y$. Let us note that the obtained optimal sample becomes informative because it depends on the responses. 
	\subsection{Noninformative D-optimal samples without high leverage points}
	\label{sec:sampling_strategy}
	Let $s_n^{(0)}$ be an initial sample of size $n$, which does not include high leverage points. At the end of this section we describe a method for getting such an initial sample.
	
	Let $ {\cal C}^{(t)}$ be a set of candidate points for the exchange at the current iteration and let  ${\bm{X}}_t$ be the design matrix associated to the sample $s_n^{(t)}$. 
	To update  $s_n^{(t)}$, firstly  we remove from it the unit $i_{m}$  with the smallest prediction variance, i.e.
	$$
	i_{m} = \underset{i \in s_n^{(t)}}{\arg \min} ~ h_{ii};
	$$
	let ${\bm{X}}_t^{\!-}$ denote the design matrix obtained by omitting the row  $\bm{x}_{i_m}$ from $\bm{X}_t$.
	Subsequently, we add the unit $j_a \in {\cal C}^{(t)} $  which presents the largest prediction variance  
	${\bm{x}}_{j_a}^\top ({{\bm{X}}_t^{\!-}}^\top  {\bm{X}}_t^{\!-})^{-1} \bm{x}_{j_a}$, where 
	\begin{equation}
	\label{candidate_p}
	{\cal C}^{(t)}= \left\{j:\;\;
	h_{i_{m} i_{m}}< h_{i_m i_m}(\bm{x}_j) < \nu_1 \,\frac{k+1}{n}, 
	\right\}
	\end{equation}
	and 
	$h_{i_m i_m}(\bm{x}_j)$ 
	is the leverage score obtained exchanging $\bm{x}_{i_m}$ with $\bm{x}_j$ for $j\in \{U-s_n^{(t)}\}$. For computational purposes, let us note that from Saerle (1982, p.153)
 \begin{equation}
 \label{inv_X-}
 ({\bm{X}_t^{\!-}}^\top  \bm{X}_t^{\!-})^{-1}=(\bm{X}_t^\top  \bm{X}_t)^{-1}
+(\bm{X}_t^\top  \bm{X}_t)^{-1}  \frac{\bm{x}_{i_m} \bm{x}_{i_m}^T}{1- \bm{x}_{i_m}^T (\bm{X}_t^\top  \bm{X}_t)^{-1} \bm{x}_{i_m}} (\bm{X}_t^\top  \bm{X}_t)^{-1}
\end{equation}
 in addition,
	the next theorem provides an analytical expression for $h_{i_m i_m}(\bm{x}_j)$.
\begin{theorem}
	Let  $\widetilde{\bm{X}}_t$ be the design matrix obtained from $\bm{X}_t$ exchanging $\bm{x}_{i_m}$ with $\bm{x}_j$,	
  then
	\begin{eqnarray}
	\label{h_xj}
	h_{i_m i_m}(\bm{x}_j)= \bm{x}^\top_j \left(\widetilde{\bm{X}}_t^\top  \widetilde{\bm{X}}_t\right)^{-1} \bm{x}_j
	\end{eqnarray}
	where 
	\begin{equation}
	\label{Inv_s}
	\left(\widetilde{\bm{X}}_t^\top  \widetilde{\bm{X}}_t\right)^{-1}= 
	(\bm{X}_t^\top  \bm{X}_t)^{-1}-(\bm{X}_t^\top  \bm{X}_t)^{-1} \,\frac{\bm{A}}{d}\, (\bm{X}_t^\top  \bm{X}_t)^{-1},
	\end{equation}
	with
	\begin{eqnarray*}
		\bm{A}&=& \bm{x}_{i_m}^\top 
		(\bm{X}_t^\top \bm{X}_t)^{-1} 
		\bm{x}_{j} \,
		(\bm{x}_{j} \bm{x}_{i_m}^\top+\bm{x}_{i_m} \bm{x}_{j}^\top)
		+ [1-\bm{x}_{i_m}^\top (\bm{X}_t^\top \bm{X}_t)^{-1} \bm{x}_{i_m}]\,\bm{x}_{j}\bm{x}_{j}^\top\red{+}\\
		&-&
		[1+\bm{x}_{j}^\top (\bm{X}_t^\top \bm{X}_t)^{-1} \bm{x}_{j}]\,\bm{x}_{i_m}\bm{x}_{i_m}^\top;\\[.2cm]
		d &=&
		[1-\bm{x}_{i_m}^\top (\bm{X}_t^\top \bm{X}_t)^{-1} \bm{x}_{i_m}]\,[1+\bm{x}_{j}^\top (\bm{X}_t^\top \bm{X}_t)^{-1} \bm{x}_{j}]+[\bm{x}_{i_m}^\top 
		(\bm{X}_t^\top \bm{X}_t)^{-1} 
		\bm{x}_{j} ]^2.
	\end{eqnarray*}
\end{theorem}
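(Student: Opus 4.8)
The plan is to treat the exchange as a rank-two modification of the information matrix and to invert it in closed form. Swapping the row $\bm{x}_{i_m}$ for $\bm{x}_j$ leaves every other row untouched, so
$$
\widetilde{\bm{X}}_t^\top \widetilde{\bm{X}}_t = \bm{X}_t^\top \bm{X}_t - \bm{x}_{i_m}\bm{x}_{i_m}^\top + \bm{x}_j \bm{x}_j^\top .
$$
Granting \eqref{Inv_s}, equation \eqref{h_xj} is then immediate and essentially a matter of definition: after the swap the row that was $\bm{x}_{i_m}$ has become $\bm{x}_j$, and its leverage in the updated design is by construction the quadratic form $\bm{x}_j^\top(\widetilde{\bm{X}}_t^\top\widetilde{\bm{X}}_t)^{-1}\bm{x}_j$. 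All the substance of the theorem therefore lies in the inversion formula \eqref{Inv_s}.

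To obtain \eqref{Inv_s} I would gather the two rank-one corrections into a single low-rank term and apply the Woodbury identity (which subsumes the Sherman--Morrison step already quoted as \eqref{inv_X-}). Writing $\bm{U}=[\,\bm{x}_j \mid \bm{x}_{i_m}\,]$ for the $(k+1)\times 2$ matrix with those two columns and $\bm{S}=\mathrm{diag}(1,-1)$, the update is $\widetilde{\bm{X}}_t^\top\widetilde{\bm{X}}_t=\bm{X}_t^\top\bm{X}_t+\bm{U}\bm{S}\bm{U}^\top$, whence
$$
\bigl(\widetilde{\bm{X}}_t^\top\widetilde{\bm{X}}_t\bigr)^{-1}=(\bm{X}_t^\top\bm{X}_t)^{-1}-(\bm{X}_t^\top\bm{X}_t)^{-1}\bm{U}\bigl(\bm{S}^{-1}+\bm{U}^\top(\bm{X}_t^\top\bm{X}_t)^{-1}\bm{U}\bigr)^{-1}\bm{U}^\top(\bm{X}_t^\top\bm{X}_t)^{-1}.
$$
Here the standing assumption that the sampled design has full column rank guarantees that $\bm{X}_t^\top\bm{X}_t$ and $\widetilde{\bm{X}}_t^\top\widetilde{\bm{X}}_t$ are nonsingular, and the advantage of using $\bm{U}\bm{S}\bm{U}^\top$ directly is that I never have to argue separately that the intermediate matrix $\bm{X}_t^{\!-}$ has full rank.

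The computation then reduces to a $2\times 2$ inversion. With the scalars $a=\bm{x}_{i_m}^\top(\bm{X}_t^\top\bm{X}_t)^{-1}\bm{x}_{i_m}$, $b=\bm{x}_j^\top(\bm{X}_t^\top\bm{X}_t)^{-1}\bm{x}_j$ and $c=\bm{x}_{i_m}^\top(\bm{X}_t^\top\bm{X}_t)^{-1}\bm{x}_j$, one has $\bm{S}^{-1}+\bm{U}^\top(\bm{X}_t^\top\bm{X}_t)^{-1}\bm{U}=\left(\begin{smallmatrix}1+b & c\\ c & a-1\end{smallmatrix}\right)$, whose determinant is exactly $-d$ for the $d$ in the statement. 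Inverting this $2\times 2$ block, expanding $\bm{U}(\cdot)^{-1}\bm{U}^\top$ into the outer products $\bm{x}_j\bm{x}_j^\top$, $\bm{x}_{i_m}\bm{x}_{i_m}^\top$ and the cross term $\bm{x}_j\bm{x}_{i_m}^\top+\bm{x}_{i_m}\bm{x}_j^\top$, and pulling out the common factor $1/d$, reproduces precisely the numerator $\bm{A}$ and the denominator $d$ displayed in \eqref{Inv_s}. I expect the only real obstacle to be bookkeeping: keeping the signs straight once the determinant turns out to be $-d$ rather than $d$, and matching each outer product to its scalar coefficient so that the final expression collapses to the stated $\bm{A}$. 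The alternative route --- applying Sherman--Morrison to remove $\bm{x}_{i_m}$ via \eqref{inv_X-} and then again to add $\bm{x}_j$, finally substituting --- reaches the same formula but through nested fractions that are appreciably messier to simplify.
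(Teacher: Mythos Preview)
Your proposal is correct and follows essentially the same route as the paper: the paper's own proof merely cites Lemma~3.3.1 of Fedorov (1972) --- which is exactly the rank-two Woodbury update for $\bm{X}_t^\top\bm{X}_t - \bm{x}_{i_m}\bm{x}_{i_m}^\top + \bm{x}_j\bm{x}_j^\top$ that you carry out explicitly --- and then appeals to ``cumbersome algebra'' for the simplification you sketch. Your bookkeeping with $a,b,c$ and the $2\times2$ determinant equal to $-d$ checks out and reproduces $\bm{A}$ and $d$ exactly, so your argument is in fact more self-contained than the paper's.
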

\begin{proof}
	Expression (\ref{Inv_s}) can be obtained from Lemma 3.3.1 in Fedorov (1972) after some cumbersome algebra.
\end{proof}
		
	In force of the upper bound in (\ref{candidate_p}), our proposal is to consider as candidates for the exchange only observations in $\{U-s_n^{(t)}\}$ which  are not high leverage points.
	In addition, to speed up the algorithm we reduce the number of exchanges by imposing the lower bound in (\ref{candidate_p}). Without this lower bound, whenever $h_{i_m i_m}(\bm{x}_j) \leq h_{i_{m} i_{m}}$, the new observation $j$ would be immediately removed at the subsequent step of the algorithm.
	
	Algorithm 1  describes in detail all the steps to select a D-optimal sample without high leverage points; instead Algorithm 2 illustrates how to select an initial sample $s_n^{(0)}$ to start Algorithm 1.
	\begin{algorithm}[!h]
		{
			\caption{Non-informative D-optimal sample without high leverage points}
			\label{Algorithm 1}
			\SetAlgoLined
			\vspace{0.1cm}
			\KwInput{Dataset $\bm{D}$, sample size $n$, initial sample $s_n^{(0)}$, $\nu_1$} 
			\KwOutput{D-optimal sample without high leverage points}
			Set $t=0$\;
			\While{$t<t_{max}$}
			{Compute the leverage scores for the current sample $h_{ii}={\bm{x}}_{i}^\top (\bm{X}_{t}^\top                {\bm{X}_{t}})^{-1} {\bm{x}_i},$
				where $\bm{X}_{t}$ is the $n \times k$ matrix whose rows are $\bm{x}_i^T$ with $i \in s_n^{(t)}$\;
				Identify unit 
				$i_{m} = \underset{i \in s_n^{(t)}}{\arg \min} ~ h_{ii}$\;
					Compute 
				$
				(\bm{X}_{t}^{\!-\top} \bm{X}_{t}^{\!-})^{-1}=(\bm{X}_{t}^\top  \bm{X}_{t})^{-1}
				+(\bm{X}_{t}^\top  \bm{X}_{t})^{-1}  \frac{\bm{x}_{i_m} \bm{x}_{i_m}^T}{1- \bm{x}_{i_m}^T (\bm{X}_{t}^\top  \bm{X}_{t})^{-1} \bm{x}_{i_m}} (\bm{X}_{t}^\top  \bm{X}_{t})^{-1}
				$\;
				Select randomly $\tilde N \leq N-n$ units from $\Big\{ U-s_n^{(t)} \Big\}$. Let $\bm{x}_j$, with $j = 1, \ldots, \tilde{N}$, the observations for these units\;
				From (\ref{Inv_s}), compute $(\widetilde{\bm{X}}_t^\top \widetilde{\bm{X}}_t)^{-1}$ and determine 
				the leverage scores $h_{i_m i_m}(\bm{x}_j)=\bm{x}_j^\top (\widetilde{\bm{X}}_t^\top \widetilde{\bm{X}}_t)^{-1} \bm{x}_j$\;
				Identify the set of candidate points for the exchange with $i_{m}$:
				${\cal C}^{(t)}=\left\{j:\;\;
				h_{i_{m} i_{m}}< h_{i_m i_m}(\bm{x}_j) < \nu_1 \,\frac{k+1}{n} 
				\right\}$\;
				Select from ${\cal C}^{(t)}$ the observation $j_a = \underset{j \in {\cal C}^{(t)}}{\operatorname{arg\,max}} ~ 
				\bm{x}_j^\top (\bm{X}_{t}^{\!-\top} \bm{X}_{t}^{\!-})^{-1} \bm{x}_j$\;
				Update $s_n^{(t)}$ by replacing unit $i_m$ with $j_a$, to form  $s_n^{(t+1)}$\;
				Set $t=t+1$\;
			}
		}
\vspace{0.1cm}
	\begin{remark}
		In step \textit{6} it is reasonable to consider  $\tilde{N}=N-n$ whenever $N$ is not too large.
	\end{remark}	
	\end{algorithm}	
	\begin{algorithm}[!h]
		{
			\caption{Initialization step for Algorithm 1}
			\label{Algorithm 2}
			\SetAlgoLined
			\vspace{0.1cm}
			\KwInput{Dataset $\bm{D}$, sample size $n$, $\nu_2$} 
			\KwOutput{$s_n^{(0)}$: initial sample  without high leverage points}
			From $U$ select without replacement a simple random sample of size $n$,  $r_n^{(0)}$ and set $t=0$\;
			\While{$t<t_{max}$}
			{Compute the leverage scores for the current sample $h_{ii}={\bm{x}}_{i}^\top (\bm{X}_{t}^\top {\bm{X}_{t}})^{-1} {\bm{x}_i},$
				where $\bm{X}_{t}$ is the $n \times k$ matrix whose rows are $\bm{x}_i^T$ with $i \in r_n^{(t)}$\;
				Identify unit 
				$i_{m} = \underset{i \in s_n^{(t-1)}}{\arg \max} ~ h_{ii}$ \;
				{\eIf{$h_{i_{m} i_{m}}<  \nu_2 \,\frac{k+1}{n}$ } {set $s_n^{(0)}=r_n^{(t)}$ and stop the iterative procedure} {
				Select randomly $\tilde N \leq N-n$ units from $\Big\{ U-r_n^{(t)} \Big\}$. Let $\bm{x}_j$, with $j = 1, \ldots, \tilde{N}$, the observations for these units. From (\ref{h_xj}), compute  $h_{i_m i_m}(\bm{x}_j)$ and identify the set of  points candidate for the exchange with $i_{m}$:
				${\cal C}^{(t)}=\left\{j:\;\;
				 h_{i_m i_m}(\bm{x}_j) < \nu_2 \,\frac{k+1}{n} 
				\right\}$\;
				Select at random a unit $j_a$ from  ${\cal C}^{(t)}$ \;
				Determine  $r_n^{(t+1)}$ by replacing unit $i_m$ with $j_a$ in $r_n^{(t)}$ \;
				Compute $(\bm{X}_{t+1}^\top {\bm{X}_{t+1}})^{-1}$ by applying (\ref{Inv_s}) which is based on $(\bm{X}_{t}^\top {\bm{X}_{t}})^{-1}$\;
				Set $t=t+1$\;
			}
		} 
			}  
		}
		\vspace{0.1cm}
		\begin{remark}
			In step \textit{8} it is reasonable to consider  $\tilde{N}=N-n$ whenever $N$ is not too large.
		\end{remark}	
	\end{algorithm}	
	\subsection{Informative D-optimal sample without outliers}
	\label{informative sample}
	The previous exchange algorithm can be applied whenever
	the response values are not observed (for instance, if it is expensive to measure $Y$); this selection procedure protects against potential outliers in $Y$ that are associated with the high leverage points in the factor-space. However, if  the response values are available, then this information can be exploited by the exchange algorithm to avoid all the outliers in $Y$, obtaining an informative D-optimal subsample.
	
	According  to \cite{Chatterjee1986influential} an influential data point is an observation that strongly influences the fitted values. To identify these influential data points, Cook’s distance may be applied. In fact, Cook's distance  for the $i$-th observation, $C_i$, measures how much all of the fitted values in the model change when the $i$-th data point is deleted:
	\begin{eqnarray}
		C_i&=&\dfrac{(\hat{\bm{Y}}-\hat{\bm{Y}}_{(i)})^\top (\hat{\bm{Y}}-\hat{\bm{Y}}_{(i)})}{(k+1)\hat\sigma^2},
		\nonumber
		\\
		&=&
		\frac{(Y_i-\hat{Y}_i)^2}{(k+1) \hat\sigma^2} \cdot \frac{h_{ii}}{(1-h_{ii})^2},
		\quad 
		i=1,\ldots,n,
		\label{Cook_distance}
	\end{eqnarray}
	where $\hat{\bm{Y}}= \bm{X} \hat{\boldsymbol{\beta}}^\top$, $\hat\sigma^2$ is the residual mean square estimate of $\sigma^2$ and $\hat{\bm{Y}}_{(i)}= \bm{X} \hat{\boldsymbol{\beta}}_{(i)}^\top$  is the vector of predicted values when the $i$-th unit is removed from the data set $\bm{D}$. 
	
	 A general practical rule is that any observation with a Cook’s distance larger than $4/n$ may be considered an influential point.
	 
	 When the response values are available, Algorithm 1 can be improved by removing the influential points from the  set ${\cal C}^{(t)}$ of the data candidate for the exchange, as illustrated in Algorithm 3 (for the computation of Cook's distance, expression  (\ref{Cook_distance}) is used to reduce the computational burden).
	

\begin{algorithm}[!h]{
		\caption{Informative D-optimal sample without outiliers: additional steps to be included between 9 and 10 in Algorithm 1 }
		\label{Algorithm 3}
		\SetAlgoLined
		\setcounter{AlgoLine}{0}
		\vspace{0.1cm}
		\KwInput{Dataset $\bm{D}$, sample size $n$} 
		\KwOutput{Informative D-optimal sample without outliers}
		Compute Cook's distance for unit $j_a$,  $C_{j_a}$, from (\ref{Cook_distance})\;
		{\eIf{$C_{j_a} < 4/n$ 
		} {accept the exchange and go to step {\bf 10} of Algorithm 1}  {reject the exchange; remove unit $j_a$ from ${\cal C}^{(t)}$ and go back to step {\bf 9} of Algorithm 1}
		} 
	}
\end{algorithm}

\section{Optimal subsampling to get accurate predictions}
	\label{sec:I_optimal_sample}
In the previous section we aim at selecting a subsample with the goal of getting a precise estimation of the parameters. Differently, if we are interested in obtaining accurate predictions on a set of values ${\cal X}_0=\{ \bm{x}_{01},\ldots,\bm{x}_{0N_0}\}$, then we should 	select the observations minimizing the overall prediction variance. 
Let $\hat{Y}_{0i}= \hat{\boldsymbol{\beta}}^\top \bm{x}_{0i}$ be the prediction of $\mu_{0i}=E(Y_{0i}|\bm{x}_{0i})$ at $\bm{x}_{0i}$, $i=1,\ldots,N_0$. 
 The prediction variance at $\bm{x}_{0i}$, also known as \lq\lq mean squared prediction error'' is
	$${\rm MSPE}(\hat{Y}_{0i}|\bm{x}_{0i},\bm{X})={\rm E}[(\hat{Y}_{0i}-\mu_{0i})^2|\bm{x}_{0i},\bm{X}].$$
If ${\bm{X}}_0$ is the $N_0\times k$ matrix whose $i$-th row is $\bm{x}_{0i}^\top$, then a measure of the overall mean squared prediction error is the sum of the prediction variances in ${\cal X}_0$:
\begin{eqnarray}
&\sum_{i=1}^{N_0} &  \!\!
{\rm MSPE}(\hat{Y}_{0i}|\bm{x}_{0i},\bm{X}) 
\,=\,
 \sigma^2 \, 
{\rm trace} [{\bm{X}}_0 ({\bm{X}}^\top {\bm{X}})^{-1} {\bm{X}}_0^\top]\!
 \nonumber\\
&=&\!\!\!
\sigma^2  \,\trace \!\left[\left( \sum_{i = 1}^{N} \bm{x}_i \bm{x}_i^\top I_i\right)^{-1}  \!
		 {\bm{X}}_0^\top {\bm{X}}_0  \right]\!,
		\quad
		I_i = \begin{cases}
			1 & \mbox{if } i \in s_n\\
			0 & \mbox{otherwise} 
		\end{cases}
\label{MSPE}
\end{eqnarray}
with $l = 1, \ldots, N$.
In this context, the I-optimal sample should be selected, which minimizes the  overall prediction variance (\ref{MSPE}):
$$
		s_n^{I}
		= 
		\underset{s_n = \left\{ I_1, \ldots,I_N \right\} }{\operatorname{arg\,inf}} \trace \left[\left( \sum_{i = 1}^{N} \bm{x}_i \, \bm{x}_i^\top I_i\right)^{-1}  \!
		 {\bm{X}}_0^\top {\bm{X}}_0  \right].
$$
	
	If we also aim at preventing outliers, then we have to modify the deletion and augmentation steps of the exchange algorithm described in Section \ref{sec:sampling_strategy} accordingly to the $I$-criterion. By taking into account the results given in Appendix A of Meyer and Nachtsheim (1995),  the current sample $s_n^{(t)}$ should be updated by removing the unit $i_{m}$
		which minimises
		$$\tilde{h}_{ii}=\frac{  \bm{x}_i^\top \,({\bm{X}}_t^\top {\bm{X}_t})^{-1}
			\;{\bm{X}}_0^\top {\bm{X}}_0\;
			({\bm{X}}_t^\top {\bm{X}_t})^{-1}\,
			\bm{x}_i }
		{1- \bm{x}_i^T ({\bm{X}}_t^\top {\bm{X}}_t)^{-1} \bm{x}_i },
		$$
		where $\bm{X}_t$ is the $n \times k$ matrix whose rows are $\bm{x}_i^T$ with $i \in s_n^{(t)}$.  
		
		Subsequently, from a set ${\cal C}^{(t)}$ of candidate points,  we should add the unit 
		$$
		j_a=\arg\max_{j\in {\cal C}^{(t)}}\frac{{  \bm{x}_j^\top \,({\bm{X}}_t^{\!-\top} {\bm{X}}_t^{\!-})^{-1}
				\;{\bm{X}}_0^\top {\bm{X}}_0\;
				({\bm{X}}_t^{\!-\top} {\bm{X}}_t^{\!-})^{-1}\,
				\bm{x}_j }}
		{1+ \bm{x}_j^T ({\bm{X}}_t^{\!-\top} {\bm{X}}_t^{\!-})^{-1} \bm{x}_j },
		$$
		where ${\bm{X}}_t^{\!-}$
		is the design matrix obtained by removing the row  $\bm{x}_{i_m}$ from $\bm{X}_t$ and $({\bm{X}}_t^{\!-\top} {\bm{X}}_t^{\!-})^{-1}$
		can be computed from (\ref{inv_X-}). 
		The set of candidate points should be formed by units that are not immediately removed in the subsequent step of the procedure and also are not high leverage points; therefore, ${\cal C}^{(t)}$ is 
		$$
		{\cal C}^{(t)}=\left\{j:\;\;
		\tilde{h}_{i_m i_m}(\bm{x}_j) >\tilde{h}_{i_{m} i_{m}}\, \cap \,
		{h}_{i_m i_m}(\bm{x}_j)< \nu_1 \,\frac{k+1}{n} 
		\right\}
		$$
		where
		$$
		\tilde{h}_{i_m i_m}(\bm{x}_j)=
		\frac{  \bm{x}_j^T \,(\widetilde{\bm{X}}_t^\top \widetilde{\bm{X}}_t)^{-1}
		\;{\bm{X}}_0^\top {\bm{X}}_0\;
			(\widetilde{\bm{X}}_t^\top \widetilde{\bm{X}}_t)^{-1}\,
			\bm{x}_j }
		{1- \bm{x}_j^T (\widetilde{\bm{X}}_t^\top \widetilde{\bm{X}}_t)^{-1} \bm{x}_j },
		$$ 
		$$
		{h}_{i_m i_m}(\bm{x}_j)=\bm{x}_j^T (\widetilde{\bm{X}}_t^\top \widetilde{\bm{X}}_t)^{-1} \bm{x}_j, 
		$$
		$\widetilde{\bm{X}}_t$ is the matrix obtained from $\bm{X}_t$ by exchanging $\bm{x}_{i_m}$ with $\bm{x}_j$ and $(\widetilde{\bm{X}}_t^\top \widetilde{\bm{X}}_t)^{-1}$ can be computed from (\ref{Inv_s}).
		\begin{algorithm}[]
		{
			\caption{Non-informative I-optimal sample without high leverage points}
			\label{Algorithm 4}
			\SetAlgoLined
			\vspace{0.1cm}
			\KwInput{Dataset $\bm{D}$, sample size $n$, initial sample $s_n^{(0)}$, prediction-set ${\cal X}_0=\{ \bm{x}_{01},\ldots,\bm{x}_{0N_0}\}$, $\nu_1$} 
			\KwOutput{I-optimal sample without high leverage points}
			Set $t=0$\;
			\While{$t<t_{max}$}
			{For the current sample, compute
			$\tilde{h}_{ii}=\dfrac{  \bm{x}_i^\top \,({\bm{X}}_t^\top {\bm{X}_t})^{-1}
			\;{\bm{X}}_0^\top {\bm{X}}_0\;
			({\bm{X}}_t^\top {\bm{X}_t})^{-1}\,
			\bm{x}_i }
		{1- \bm{x}_i^T ({\bm{X}}_t^\top {\bm{X}}_t)^{-1} \bm{x}_i },$
				where $\bm{X}_{t}$ is the $n \times k$ matrix whose rows are $\bm{x}_i^T$ with $i \in s_n^{(t)}$ and 
				${\bm{X}}_0$ is the $N_0\times k$ matrix whose rows are the elements of ${\cal X}_0$
				\;
				Identify unit 
				$i_{m} = \underset{i \in s_n^{(t)}}{\arg \min} ~ \tilde{h}_{ii}$\;
				Compute 
				$
				(\bm{X}_{t}^{\!-\top} \bm{X}_{t}^{\!-})^{-1}\!=\!(\bm{X}_{t}^\top  \bm{X}_{t})^{-1}
				\!+\!  \dfrac{(\bm{X}_{t}^\top  \bm{X}_{t})^{-1}\,
				\bm{x}_{i_m} \bm{x}_{i_m}^T \,
				(\bm{X}_{t}^\top  \bm{X}_{t})^{-1}}
				{1- \bm{x}_{i_m}^T (\bm{X}_{t}^\top  \bm{X}_{t})^{-1} \bm{x}_{i_m}}
				$
				\;
				Select randomly $\tilde N \leq N-n$ units from $\Big\{ U-s_n^{(t)} \Big\}$. Let $\bm{x}_j$, with $j = 1, \ldots, \tilde{N}$, the observations for these units\;
				From (\ref{Inv_s}) compute $(\widetilde{\bm{X}}_t^\top \widetilde{\bm{X}}_t)^{-1}$ and  determine the leverage scores 
				$h_{i_m i_m}(\bm{x}_j)=\bm{x}_j^\top (\widetilde{\bm{X}}_t^\top \widetilde{\bm{X}}_t)^{-1} \bm{x}_j $ and
				$
		\tilde{h}_{i_m i_m}(\bm{x}_j)=
		\dfrac{  \bm{x}_j^T \,(\widetilde{\bm{X}}_t^\top \widetilde{\bm{X}}_t)^{-1}
			\,{\bm{X}}_0^\top {\bm{X}}_0\,
			(\widetilde{\bm{X}}_t^\top \widetilde{\bm{X}}_t)^{-1}\,
			\bm{x}_j }
		{1- \bm{x}_j^T (\widetilde{\bm{X}}_t^\top \widetilde{\bm{X}}_t)^{-1} \bm{x}_j }
		$
		\;
		Identify the set of candidate points for the exchange with $i_{m}$:
		$
		{\cal C}^{(t)}=\left\{j:\;\;
		\tilde{h}_{i_m i_m}(\bm{x}_j) >\tilde{h}_{i_{m} i_{m}}\, \cap \,
		{h}_{i_m i_m}(\bm{x}_j)< \nu_1 \,\frac{k+1}{n} 
		\right\}
		$\;
		Select from ${\cal C}^{(t)}$ the observation $
		j_a=\arg\max_{j\in {\cal C}^{(t)}}\dfrac{{  \bm{x}_j^\top \,({\bm{X}}_t^{\!-\top} {\bm{X}}_t^{\!-})^{-1}
				\;{\bm{X}}_0^\top {\bm{X}}_0\;
				({\bm{X}}_t^{\!-\top} {\bm{X}}_t^{\!-})^{-1}\,
				\bm{x}_j }}
		{1+ \bm{x}_j^T ({\bm{X}}_t^{\!-\top} {\bm{X}}_t^{\!-})^{-1} \bm{x}_j }
		$\;
				Update $s_n^{(t)}$ by replacing unit $i_m$ with $j_a$, to form  $s_n^{(t+1)}$\;
				Set $t=t+1$\;
			}
		}
\vspace{0.1cm}
	\end{algorithm}	
\section{Numerical studies}
	\label{sec:simulations}
	
\subsection{Simulation results}
	
In this section, we evaluate the performance of our proposals through a simulation study.
We start from the random generation of $H=30$ datasets of size $N=10^6$, each one including $N_2=500$ high leverage points/outliers. The computation of some metrics will illustrate the validity of our procedure in selecting a D- or I-optimal subsample without outliers.

Precisely, for $h=1, \ldots, H$,  $N$  iid repetitions of a $10$-variate explanatory variable 
	$_h\tilde{\bm{x}}_i=({x}_{i1},\ldots,x_{i10})^\top$ are generated as follows: 
\begin{enumerate}
   \item
 for $j=1,\ldots,3$, $x_{ij}$ are independently distributed as $U(0,5)$;
 \item
 for $j=4,\ldots,7$, $x_{ij}$  are distributed as a multivariate normal r.v. with zero mean and:\\
 2.a) for $  i=1, \ldots, (N-N_2)$, covariance matrix $
\bm{\Sigma}_1=\left[\begin{array}{cc}
9 & -1 \\
-1 & 9\\
\end{array}  \right]$ \\
2.b)  for $i=(N-N_2)+1, \ldots, N$, covariance matrix $
\bm{\Sigma}_{1.out}=\left[\begin{array}{cc}
25& 1 \\
1 & 25\\
\end{array}  \right]$; 
\item
 for $j=8,9$,  $x_{ij}$ are distributed as a multivariate  t-distribution with $3$ degrees of freedom and scale matrix
$
\bm{\Sigma}_2=\left[\begin{array}{cc}
1 & 0.5 \\
0.5 & 1\\
\end{array}  \right]
$; 
\item
for $j=10$, $x_{ij}$ is distributed as a Poisson distribution ${\cal P}(5)$.
\end{enumerate}
For each $N\times (k+1)$ factor-matrix $_h\bm{X}$, whose $i$-th raw is $_h{\bm{x}}^\top_i=(1,  _h{\tilde{\bm{x}}}^\top_i)$ ($i=1,\ldots,N$), we have generated  $S=50$ independent $N\times 1$ response vectors
$_h{\!\bm{Y}\!}_{s}$ (with $s=1,\ldots, S$), whose $i$-th item is  
$$
_h{Y}_{s,i} = _h\!\bm{x}^\top_i \boldsymbol{\beta} + \varepsilon_{si}, \quad i=1,\ldots, N, 
$$  
with 
\begin{itemize}
    \item[i)]
     $\bm{\beta}=(1, 1, 1, 1, 2, 2, 2, 2, 1, 1, 1)$ and $\sigma=3$ for $i=1, \ldots, N-N_2$ 
    \item[ii)]
	 $\bm{\beta}\!=\!(1, 1, 1, 1, -2, -2, -2, -2, 1, -1, -1)$, $\sigma\!=\!20$  for $i\!=\!N\!-\!N_2\!\ +1, \ldots, N$. 
\end{itemize}
At each simulation step $(h,s)$, with $h=1,\ldots,H$ and $s=1,\ldots,S$, 
we have applied the following Algorithms:
\begin{enumerate}
    \item Non-informative I (Algorithm 4)
    \item Non-informative D (Algorithm 1) 
    \item  Informative I (Algorithm 4 and Algorithm 3)
    \item Informative D (Algorithm 1 and Algorithm 3)
    \item Simple random sampling (SRS): passive learning selection
\end{enumerate}
to draw a different subsample from the simulated dataset: 
$$_h\bm{D}_{s}=\{(_h{\bm{x}}_1^\top,\, _{h}{y}_{s,1}),\ldots,(_h{\bm{x}}_N^\top,\, _{h}{y}_{s,N})\},\quad h=1, \ldots,H,\; s=1,\ldots,S. 
$$

To check the validity of the inferential results obtained from the distinct subsamples, we have generated a test set of size $N_T=500$:  $$\bm{D}_{T}=\{(\bm{x}_{T1},y_{T1}),\ldots, (\bm{x}_{TN_{T}},y_{TN_{T}}) \},$$ without high leverage points and outliers (i.e. with $N_2=0$). 

Finally, to implement the I-optimality procedure, we have generated a prediction region ${\cal X}_0$ 
	without high leverage points; in addition, to compare the behaviour of the distinct subsamples in ${\cal X}_0$, we have generated also the corresponding responses (without outliers). Let 
	$$\bm{D}_{0}=\{(\bm{x}_{01},y_{01}),\ldots, (\bm{x}_{0N_{0}},y_{0N_{0}}) \}$$
	be the prediction set, where $N_0=500$. 
\\[.2cm]
Let us denote by $s_{n}^{(h,s)}$ a subsample selected from the dataset $_h\bm{D}_{s}$ generated at the $(h,s)$-th simulation step, for $h=1,\ldots,H$ and $s=1,\ldots,S$, and 
let
  $$
  I_{i}^{(h,s)}\! = \begin{cases}
			1 & \mbox{if } i \in s_n^{(h,s)}\\
			0 & \mbox{otherwise} 
		\end{cases}, \quad i=1,\ldots,N,
$$
be the corresponding sampling indicator variable.\\
At each simulation step $(h,s)$,  
to evaluate the performance of the subsampling techniques, we have computed:
\begin{itemize}
    \item 
   The average mean squared prediction error in ${\cal X}_0$ (from (\ref{MSPE})): 
$$
  {\rm MSPE}_{{\cal X}_0}^{(h,s)}=
\sigma^2  \dfrac{\trace \!\left[\!\left( \sum_{i = 1}^{N} {_h\bm{x}}_i\, {_h\bm{x}}_i^\top I_{i}^{(h,s)}\right)^{\!-1}  \!
		 \bm{X}_0^\top \bm{X}_0  \right]}{N_0};
$$  
   \item 
    The logarithm of the determinant of the information matrix: 
$$
	{\rm Log(det)}^{(h,s)} =  \log \left| \sum_{i = 1}^{N} {_h\bm{x}}_{i}\, {_h\bm{x}}_{i}^\top I_{i}^{(h,s)} \right|; 
$$
    \item 
    The average squared prediction error in ${\cal X}_0$ and in ${\cal X}_{T}=\{\bm{x}_{T1},\ldots,\bm{x}_{TN_{T}}\}$:
    $$
{\rm SPE}_{{\cal X}_{0}}^{(h,s)}=  \dfrac{\sum_{i=1}^{N_{0}} ( 
\hat{y}_{0i}^{(h,s)}- \mu_{0i})^2}{N_{0}}
\;\; {\rm and} \;\; 
{\rm SPE}_{{\cal X}_{T}}^{(h,s)} =  \dfrac{\sum_{i=1}^{N_{T}} ( 
\hat y_{Ti}^{(h,s)}- \mu_{Ti})^2}{N_{T}},
    $$ 
    where $\hat{y}_{0i}^{(h,s)}={_h\hat{\bm{\beta}}_s\!\!}^\top \bm{x}_{0i}$, 
    $\hat{y}_{Ti}^{(h,s)}={_h\hat{\bm{\beta}}_s\!\!}^\top \bm{x}_{Ti}$, $\mu_{0i}=\bm{\beta}^\top \bm{x}_{0i}$,
$\mu_{Ti}=\bm{\beta}^\top \bm{x}_{Ti}$ and $_h\hat{\bm{\beta}}_s$ is the OLS estimate of ${\bm{\beta}}$ based on the subsample $s_{n}^{(h,s)}$;  
\item 
    The  standard error in the prediction set ${\bm{D}_0}$ and   in the test set  ${\bm{D}_T}$: 
    $$
    {\rm SE}_{\bm{D}_0}^{(h,s)} =\dfrac{\sum_{i=1}^{N_{0}} \big( \hat y_{0i}^{(h,s)} - y_{0i}\big)^2}{N_{0}} 
     \;\; {\rm and} \;\;
     {\rm SE}_{\bm{D}_T}^{(h,s)}=\dfrac{\sum_{i=1}^{N_{T}} \big( \hat y_{Ti}^{(h,s)} - y_{Ti}\big)^2}{N_{T}}
    .
    $$
\end{itemize}

\noindent
Table 1 displays the Monte Carlo averages, 
$$
{\rm MSPE}_{{\cal X}_0}=\dfrac{\sum_{h=1}^H \sum_{s=1}^S  {\rm MSPE}_{{\cal X}_0}^{(h,s)}}{HS} \;\; {\rm and} \;\;
{\rm Log(det)}=\dfrac{\sum_{h=1}^H \sum_{s=1}^S {\rm Log(det)}^{(h,s)}}{HS},
$$
for the different sampling strategies: non-inf. I, non-inf. D, inf. I, inf. D and SRS, respectively. 
The results are obtained  having setted $n=500$, $\tilde{N}=2 \cdot n$, $\nu_1=2$ and $\nu_2=3$.
\begin{table}[ht]
\centering
\begin{tabular}{rcc}
  \hline
 Algorithm &  ${\rm MSPE}_{{\cal X}_0}$ &  Log(det) \\ 
  \hline
non-inf. I & {\bf 0.0857}  & 93.4269 \\ 
non-inf. D & 0.0947  & {\bf 94.3877} \\
inf. I & 0.0938  & 92.0869 \\ 
inf. D & 0.1030  & 92.7748 \\ 
SRS & 0.2056  & 82.5234 \\ 
   \hline
\end{tabular}
\caption{{\footnotesize Monte Carlo averages ${\rm MSPE}_{{\cal X}_0}$ and Log(det) for the subsamples of size $n=500$ obtained from the different Algorithms
}}
\end{table}

Accordingly to the definitions of I- and D-optimality, the minimum value of the ${\rm MSPE}_{{\cal X}_0}$ is associated to the noninformative I-Algorithm, while the maximum value of the Log(Det) corresponds to the noninformative D-subsample. Therefore, Algorithms 1 and 4 provide samples that do not include high leverage points and are \lq\lq nearly" D- and I-optimal (they are not exactly D- and I-optimal because of the exclusion of these high values). 

Table 2 lists the following Monte Carlo averages: \\[.2cm]
${\rm SPE}_{{\cal X}_{0}}=\sum_{h=1}^H \sum_{s=1}^S {\rm SPE}_{{\cal X}_{0}}^{(h,s)}/HS\;$, 
$\;\;{\rm SPE}_{{\cal X}_{T}}=\sum_{h=1}^H \sum_{s=1}^S {\rm SPE}_{{\cal X}_{T}}^{(h,s)}/HS$, \\[.2cm]
${\rm SE}_{\boldmath{D}_0}=\sum_{h=1}^H \sum_{s=1}^S {\rm SE}_{\boldmath{D}_0}^{(h,s)}/HS\;$ 
and 
$\;{\rm SE}_{\boldmath{D}_T}= \sum_{h=1}^H \sum_{s=1}^S {\rm SE}_{\boldmath{D}_T}^{(h,s)}/HS$,\\[.2cm]
for the different subsamples. These Monte Carlo averages represent an empirical version of MSPE and MSE on ${\cal X}_{0}$ and ${\cal X}_{T}$, respectively. 
From these results, we can appreciate the prominent role of the informative procedures in selecting subsamples without outliers. In fact, when the database includes outliers in $Y$ which are not associated with high leverage points (as in this simulation study), then  only the informative procedure enables us to exclude these abnormal values from the subsample. 

\begin{table}[ht]
\centering
\begin{tabular}{rccccc}
  \hline
 Algorithm & ${\rm SPE}_{{\cal X}_{0}}$ & ${\rm SPE}_{{\cal X}_{T}}$ & ${\rm SE}_{\boldmath{D}_0}$ & ${\rm SE}_{\boldmath{D}_T}$ \\ 
  \hline
 non-inf. I & 6.5104 & 6.8020 & 16.0792 & 16.3538 \\  
 non-inf. D & 6.1011 & 6.2945 & 15.5982 & 15.7969 \\  
 inf. I & {\bf 0.1464} & {\bf 0.1494} & {\bf 9.4445} & {\bf 9.5337} \\ 
 inf. D & 0.1594 & 0.1601 & 9.4564 & 9.5448 \\ 
 SRS & 0.2629 & 0.2671 & 9.5683 & 9.6594 \\ 
   \hline
\end{tabular}
\caption{{\footnotesize Monte Carlo averages ${\rm SPE}_{{\cal X}_{0}}$, ${\rm SPE}_{{\cal X}_{T}}$, ${\rm SE}_{\boldmath{D}_0}$ and ${\rm SE}_{\boldmath{D}_T}$ for the subsamples of size $n=500$ obtained from the different Algorithms
}}
\end{table}

\noindent
{\bf Remark.}
Actually, to take into consideration the randomness of the SRS technique, we have drawn $N_{SRS}=50$ different independent SRSs from each dataset  $_h\bm{D}_{s}$, for $h=1, \ldots,H$ and $s=1,\ldots,S$; the Monte Carlo averages for SRS are based also on these additional observations.	
	

\end{document}